\documentclass[sigconf,natbib=true,screen=true,anonymous=false,review=false]{acmart}

\author{Philipp Hager}
\orcid{0000-0001-5696-9732}
\affiliation{%
    \institution{University of Amsterdam}
    \city{Amsterdam}
    \country{The Netherlands}
}
\email{p.k.hager@uva.nl}

\author{Onno Zoeter}
\orcid{0000-0003-1704-706X}
\affiliation{%
    \institution{Booking.com}
    \city{Amsterdam}
    \country{The Netherlands}
}
\email{onno.zoeter@booking.com}

\author{Maarten de Rijke}
\orcid{0000-0002-1086-0202}
\affiliation{%
    \institution{University of Amsterdam}
    \city{Amsterdam}
    \country{The Netherlands}
}
\email{m.derijke@uva.nl}

\usepackage[inline]{enumitem}
\usepackage{multirow}
\usepackage{acronym}
\usepackage{stfloats}
\usepackage[skip=0pt]{caption}
\usepackage{etoolbox}
\usepackage{dsfont}

\makeatletter 
\newcommand\mynobreakpar{\par\nobreak\@afterheading} 
\makeatother

\usepackage[belowskip=-12pt,aboveskip=0pt]{caption}

\def\reals{\mathbb{R}}



\newcommand{\supp}[1]{\text{supp}\left( {#1} \right)}

\input{preamble/meta}

\begin{document}
\title[Unidentified and Confounded? Understanding Two-Tower Models for Unbiased Learning to Rank]{Unidentified and Confounded?\\Understanding Two-Tower Models for Unbiased Learning to Rank}

\keywords{Unbiased learning to rank, Click models, Two-tower models}

\begin{CCSXML}
<ccs2012>
   <concept>
       <concept_id>10002951.10003317.10003338.10003343</concept_id>
       <concept_desc>Information systems~Learning to rank</concept_desc>
       <concept_significance>500</concept_significance>
       </concept>
 </ccs2012>
\end{CCSXML}
\ccsdesc[500]{Information systems~Learning to rank}

\begin{abstract}
    Additive two-tower models are popular learning-to-rank methods for handling biased user feedback in industry settings. Recent studies, however, report a concerning phenomenon: training two-tower models on clicks collected by well-performing production systems leads to decreased ranking performance. This paper investigates two recent explanations for this observation: confounding effects from logging policies and model identifiability issues. We theoretically analyze the identifiability conditions of two-tower models, showing that either document swaps across positions or overlapping feature distributions are required to recover model parameters from clicks. We also investigate the effect of logging policies on two-tower models, finding that they introduce no bias when models perfectly capture user behavior. However, logging policies can amplify biases when models imperfectly capture user behavior, particularly when prediction errors correlate with document placement across positions. We propose a sample weighting technique to mitigate these effects and provide actionable insights for researchers and practitioners using two-tower models.
\end{abstract}

\maketitle

\vspace*{-1mm}
\section{Introduction}

\begin{figure}
    \hspace*{-2em}
    \includegraphics[clip, trim=0mm -18mm 0mm 0mm, width=0.9\linewidth]{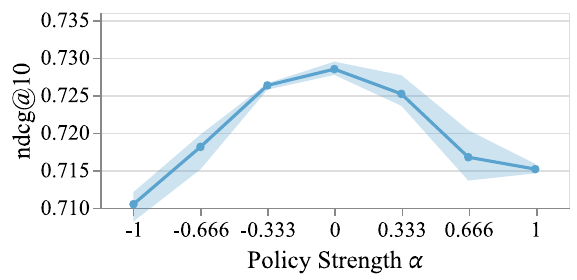}
    \vspace*{-1.4cm}
    \caption{Two-tower models trained on deterministic logging policies of varying strengths ($\alpha$) on MSLR30K: $\alpha = 1$ represents sorting by expert annotations, $\alpha = 0$ random sorting, and $\alpha = -1$ inversely ranking from least to most relevant.}
    \label{fig:example}
\end{figure}

Unbiased learning-to-rank (ULTR) addresses the challenge of optimizing ranking models for search and recommendation using implicit user feedback~\cite{Joachims2017IPW,Wang2016IPW,Gupta2024ULTR}. Clicks, in particular, are widely used for training models, yet they often provide a biased signal of user preferences and should not be naively interpreted as positive or negative preference~\cite{Joachims2005EyeTracking,Craswell2008PBM}. In web search, for example, clicks can be influenced by a document's position on the result page~\cite{Joachims2005EyeTracking}, its surrounding items~\cite{Zhuang2021XPA,Craswell2008PBM,Wu2021Feeds}, or the user's trust in the search engine to list relevant items first~\cite{Agarwal2019TrustBias,Joachims2005EyeTracking,Vardasbi2020Affine}.

Additive two-tower models have emerged as a popular solution for addressing click biases in industrial settings due to their flexibility and scalability~\cite{Guo2019PAL,Yan2022TwoTowers,Zhao2019AdditiveTowers,Haldar2020Airbnb,Khrylchenko2023Yandex}. These models consist of two neural networks, referred to as towers: the first network, or relevance tower, predicts item relevance based on content-related features. The second network, or bias tower, models user examination biases from contextual features such as document position, device type (e.g., mobile vs. desktop), or display height on the screen~\cite{Yan2022TwoTowers,Chen2023-TENCENT-ULTR-1}. During training, outputs from both towers are combined to predict user clicks, whereas, at serving time, only the relevance tower is employed. Thus, the bias tower effectively acts as a mechanism for debiasing the relevance tower during training.

Recent studies describe an interesting phenomenon when using two-tower models. As production logging policies become more effective at surfacing relevant items, the ranking performance of two-tower models trained on clicks collected by those policies steadily declines~\citep{Zhang2023Disentangling}. This is a troubling finding because most companies seek to optimize their production systems for ever-better ranking performance. 
Two key concepts have been put forward as contributing factors: \emph{confounding} and \emph{identifiability}.
\citet{Zhang2023Disentangling} theorize that stronger logging policies placing more relevant items in top positions act as a confounder by introducing a correlation between bias features and document relevance that two-tower models cannot disentangle. Likewise, position bias estimations of two-tower models might be biased on data collected by strong logging policies~\citep{Luo2024UnbiasedPropensity}. Although an explanation in terms of confounding is intuitive, it is unsatisfactory as two-tower models take both relevance-related document features and position into account.\footnote{Two-tower models condition on relevance features and position, essentially already blocking the problematic paths highlighted in \cite[Fig. 2b]{Zhang2023Disentangling} and \cite[Fig. 1b]{Luo2024UnbiasedPropensity}.} Yet, the empirical results seem consistent, and we observe the same effect in preliminary reproducibility experiments. Fig.~\ref{fig:example} displays the ranking performance of two-tower models trained on data collected by policies of varying strengths;\footnote{Different strengths in Fig.~\ref{fig:example} are based on sorting by expert relevance annotations with additive noise~\cite{Zhang2023Disentangling,Deffayet2023CMIP}. Section~\ref{sec:expert_policy} shows that this policy simulation is problematic.} ranking performance decreases when the logging policy is not uniformly random, indicating the existence of an underlying mechanism.

\citet{Chen2024Identifiability} discuss an alternative explanation in terms of identifiability. Intuitively, identifiability asks under which conditions we can (uniquely) recover model parameters from observed data. Inferring model parameters in click modeling is a well-known challenge~\cite{Chuklin2015ClickModels,Joachims2017IPW}. E.g., estimating position bias parameters typically requires observations of the same query-document pair across two positions (a swap) to correctly attribute changes in clicks to bias rather than a shift in document relevance~\cite{Craswell2008PBM,Agarwal2019AllPairs,Fang2019InterventionHarvesting}. \citet{Chen2024Identifiability} analyze click models following the Examination Hypothesis (the assumption that clicks factorize into bias and relevance terms, which two-tower models make~\cite{Yan2022TwoTowers}) and show that we can only recover click model parameters on data in which all positions are \emph{connected} through document swaps. The influence of logging policies on identifiability seems clear: our production system might not collect suitable data for training two-tower models, e.g., by never swapping documents across positions~\cite{Chen2024Identifiability}. The experiments on logging policy confounding in \cite{Zhang2023Disentangling,Luo2024UnbiasedPropensity} and Fig.~\ref{fig:example} use deterministic logging policies without document swaps, suggesting identifiability is a culprit. However, work on identifiability suggests rather binary observations, i.e., sufficient swaps connect all positions or do not.~\cite{Chen2024Identifiability}. Hence, can identifiability explain a gradually deteriorating ranking performance? And is there still an effect of logging policies on two-tower models if identifiability is guaranteed?

Given their widespread usage in industry and academia, it is important to understand precisely when two-tower models work and under what conditions they might fail. Thus, our work tackles the following three research questions:

\vspace{1mm}
\begin{enumerate}[nosep, leftmargin=*, label = \textbf{(RQ\arabic*)}]
    \item What are the conditions for identifying two-tower models?
    \item Is there an impact of logging policies on two-tower models beyond identifiability?
    \item What are strategies to alleviate potential logging policy influences?
\end{enumerate}

\noindent
We first show that two-tower models require either randomized document swaps or overlapping feature distributions across ranks to be identified (RQ1). We then show that while logging policies do not influence well-specified two-tower models, they can amplify biases in misspecified models when prediction errors correlate with document placement (RQ2). Finally, we propose a sample weighting technique that helps to alleviate the bias amplification introduced by logging policies (RQ3). Our contributions are:

\vspace{1mm}
\begin{itemize}[nosep, leftmargin=*]
    \item A theoretical analysis disentangling the problem of identification and logging policy effects for additive two-tower models.
    \item A propensity weighting scheme helping to alleviate logging policy effects on misspecified two-tower models.
    \item A thorough empirical evaluation of all theoretical contributions through simulation experiments.
    \item A discussion of practical takeaways for researchers and industry practitioners working with two-tower models.
\end{itemize}

\noindent
All code, data, and complete simulation results are available at:\\\url{https://github.com/philipphager/two-tower-confounding}.

\vspace*{-3mm}
\section{Related Work}

\paragraph{Two-tower models for unbiased learning-to-rank}
\citet{Guo2019PAL} introduce the industry practice of using two-tower models to account for click biases to the research community. \citet{Yan2022TwoTowers} closely examine and question the additive assumption in two-tower models and suggest alternatives that account for mixtures of user behaviors. \citet{Zhao2019AdditiveTowers} train two-tower models with multiple types of feedback beyond clicks for video recommendation at YouTube. \citet{Haldar2020Airbnb} and \citet{Khrylchenko2023Yandex} use two-tower models for search at Airbnb and Yandex respectively. \citet{Zhuang2021XPA} and \citet{Wu2021Feeds} include surrounding items and user context when modeling bias towers. \citet{Hager2023ClickModelIPS} investigate parallels between two-tower models and pointwise inverse propensity scoring~\cite{Joachims2017IPW,Vardasbi2020Affine,Oosterhuis2020PolicyAware}. And \citet{Chen2023-TENCENT-ULTR-1} show the importance of considering bias features beyond position in web search.

\vspace*{-3mm}
\paragraph{Logging policy confounding}
The idea that logging policy performance affects two-tower models and click models more broadly has recently gained attention in unbiased learning-to-rank~\cite{Deffayet2023Robustness,Deffayet2023CMIP,Zhang2023Disentangling,Luo2024UnbiasedPropensity}. \citet{Deffayet2023Robustness} show that a strong ranking performance of click models does not guarantee inferring unbiased parameters when training on strong logging policies. \citet{Zhang2023Disentangling} introduce logging policy confounding for additive two-tower models, arguing that policies performing better than a random ranker break the independence assumption between bias and relevance towers. They use deterministic logging policies derived from expert annotations, which likely leads to identifiability problems (cf.\ Section~\ref{sec:identifiability}) and their logging policy simulation likely introduces new confounding problems (cf.\ Section~\ref{sec:expert_policy}). We build on their findings, separating logging policy effects from identifiability problems, confirming their empirical results, but showing that the primary problems are model misspecification and a flawed simulation setup. \citet{Luo2024UnbiasedPropensity} find that positions containing more relevant documents systematically inflate examination probabilities. Empirically, we confirm their findings under certain conditions and propose a simple sample-weighting scheme for binary cross-entropy to adjust for unequal item exposure under a logging policy reminiscent of corrections used in intervention harvesting for position bias estimation~\cite{Agarwal2019AllPairs,Fang2019InterventionHarvesting,Benedetto2023ContextualBias}.

\vspace*{-3mm}
\paragraph{Identifiability} The study of identifiability, fundamentally the question about when we can recover model parameters from observational data, dates back decades~\cite{Lewbel2019IdentificationZoo,Koopmans1950Identification,Hurwicz1950Identification}, but only recently emerged in unbiased learning-to-rank. \citet{Oosterhuis2022Limitations} demonstrates that a lack of document swaps across positions leads to cases in which infinitely many model parameters explain observed clicks equally well, leading to inconsistent click models. \citet{Chen2024Identifiability} connected this example to formal identifiability theory, finding that click models following the Examination Hypothesis~\cite{Richardson2007PBM,Craswell2008PBM} require randomized swaps creating a connected position graph. Our work extends this analysis to additive two-tower models and shows that identifiability can be achieved not only through exact document swaps but also through overlapping feature distributions. While \citet{Chen2024Identifiability} noted logging policies might affect identifiability, we demonstrate they can impact two-tower models in ways that extend beyond identifiability issues in practical settings.

\vspace*{-1mm}
\section{Two-Tower Models}

Our analysis focuses on additive two-tower models, the most prevalent type in practical applications~\cite{Guo2019PAL,Yan2022TwoTowers,Zhao2019AdditiveTowers,Haldar2020Airbnb}. Each tower processes different inputs: one tower handles relevance-related query-document features, while the other uses context-dependent bias features. We will focus the majority of our analysis on position bias~\cite{Joachims2005EyeTracking,Craswell2008PBM}. Each tower outputs logits\footnote{Logits are the natural logarithm of the odds of a probability $p$: $\text{logit}(p) = \ln(\text{odds}(p))$} which are added together and transformed to a click probability via the sigmoid function:
\begin{equation}
    \label{eq:two-towers}
    P(C = 1 \mid q, d, k) = \sigma(\theta_{k} + \gamma_{q,d}), \text{with: } \sigma(x) = (1 + e^{-x})^{-1}.
\end{equation}
where the probability of click $C$ is a combination of a position bias logit $\theta_{k} \in \reals$ for rank $k \in K, K = \{1, 2, \dots \}$, and a relevance logit $\gamma_{q,d} \in \reals$ for document $d \in D$ and query $q \in Q$. We use $\sigma(\cdot)$ to denote the sigmoid function. The most common way to train this model is by minimizing the expected negative log-likelihood:
\begin{equation}\label{eq:expected_log_likelihood}
    \begin{split}
        \mathcal{L}(\theta, \gamma) = - \mathbb{E}_{(q,d,k,c) \sim P(Q,D,K,C)} \bigl[c\log \sigma(\theta_k + \gamma_{q,d}) \\
+ (1-c)\log (1-\sigma(\theta_k + \gamma_{q,d}))\bigr]
    \end{split}
\end{equation}
As the true data-generating distribution $P(Q, D, K, C)$ is unknown, we typically estimate model parameters by minimizing the empirical negative log-likelihood~\cite{Yan2022TwoTowers}:
\begin{equation}
    \begin{split}
        \mathcal{\hat{L}}(\theta, \gamma: D) = - \frac{1}{N} \sum_{(q,d,k,c) \in \mathcal{D}} \bigl[c\log \sigma(\theta_k + \gamma_{q,d}) \\
        + (1-c)\log (1-\sigma(\theta_k + \gamma_{q,d}))\bigr],
    \end{split}
\end{equation}
over a dataset of $N$ observations: $\mathcal{D} = \{ \left( q_i, d_i, k_i, c_i \right) \}_{i=1}^N$.

\vspace*{-1mm}
\section{RQ1: Identifiability of Two-Tower Models}
\label{sec:identifiability}

We begin our analysis by examining when we can uniquely recover the parameters of two-tower models from click data, fundamentally a question of identifiability. We will define identifiability, analyze it for additive two-tower models similar to \citet{Chen2024Identifiability}, and extend this analysis to identifiability from query-document features.

\vspace*{-1mm}
\subsection{Identifiability}

We define identifiability following the standard work of \citet[Definition 19.4]{Koller2009PGM}, whereby a model is identifiable if no two distinct sets of parameters $(\theta_k, \gamma_{q,d}), (\theta_k', \gamma_{q,d}')$ induce the same distribution over observable variables. Meaning, if:
\begin{equation}
    P(C \mid q, d, k; \theta, \gamma) = P(C \mid q, d, k; \theta', \gamma')
\end{equation}
it must follow that $\theta_k = \theta_k'$ for all positions and $\gamma_{q,d} = \gamma_{q,d}'$ for all query-document pairs. This is an important foundational property in probabilistic modeling ensuring a one-to-one (injective) mapping between model parameters and the observable data distribution. Unidentifiability, in turn, implies that we cannot uniquely recover the true model parameters from observed data as there are multiple (possibly infinite) parameter combinations that explain the observed data equally well.

\vspace*{-1mm}
\subsection{Invariance to parameter shift}

To begin our analysis, we first must acknowledge that two-tower models face an inherent identifiability challenge: we only observe clicks, while position bias and document relevance remain unobserved. This leads to a fundamental problem of parameter shifts: we can arbitrarily increase our relevance parameters while decreasing our bias parameters by the same amount, resulting in identical click probabilities but different parameter values.

To demonstrate this shift, let's consider a set of alternative parameters $\theta_{k}'$ and $\gamma_{q,d}'$ next to our true model parameters $\theta_{k}$ and $\gamma_{q,d}$. Since the sigmoid function $\sigma(\cdot)$ is injective, identical click probabilities require identical inputs to the sigmoid. Therefore, any two parameter sets producing the same click probability must satisfy:
\begin{equation}\label{eq:shift}
    \theta_{k} + \gamma_{q,d} = \theta_{k}' + \gamma_{q,d}' \quad \forall \, (q,d,k)
\end{equation}
We can construct valid alternative parameters that follow Eq.~\ref{eq:shift} by considering a real-valued constant $\Delta_k \in \reals$ for any rank $k \in K$:
\begin{equation}\label{eq:deltas}
    \theta_{k}' = \theta_{k} + \Delta_k, \quad \gamma_{q,d}' = \gamma_{q,d} - \Delta_k.
\end{equation}
This shows the invariance of additive two-tower models to additive parameter shifts. By rearranging, we highlight:
\begin{equation}\label{eq:deltas-2}
    \gamma_{q,d} - \gamma_{q,d}' = \theta_{k}' - \theta_{k} = \Delta_k,
\end{equation}
that the left-hand side depends solely on query $q$ and document $d$, while the right-hand side depends only on the rank $k$. Consequently, when documents appear exclusively in a single position (never appearing in multiple ranks), we can choose an arbitrary offset $\Delta_k$ to shift their relevance parameters and adjust the bias parameter accordingly without changing the overall click probabilities. To illustrate, consider two documents $d_1$ and $d_2$  that are displayed exclusively at ranks 1 and 2 respectively:
\begin{equation}\label{eq:two-deltas}
    \gamma_{q,d_1} - \gamma_{q,d_1}' = \Delta_1, \quad
    \gamma_{q,d_2} - \gamma_{q,d_2}' = \Delta_2.
\end{equation}
Without observing one document across both positions, the offsets \(\Delta_1\) and \(\Delta_2\) can be chosen independently, preventing a unique decomposition of bias and relevance parameters. Therefore, \textit{additive two-tower models are unidentifiable when observing each query-document pair only at a single position~\cite{Chen2024Identifiability,Oosterhuis2022Limitations}}.

\vspace*{-1mm}
\subsection{Identification through randomization}
\label{sec:identifiability-randomization}

Now suppose that we observe a query-document pair $q,d$ across positions 1 and 2. Note that the left-hand side of Eq.~\ref{eq:deltas-2} depends only on the current document. Thus, observing the exact same query-document pair across two positions forces the parameter offsets between both positions to be equal:
\begin{equation}
    \gamma_{q,d} - \gamma_{q,d}' = \Delta_1 = \Delta_2.
\end{equation}
Note that we do not need all query-document pairs to swap across all positions. \citet{Chen2024Identifiability} observe that it is sufficient that an undirected graph $G = (V, E)$, where vertices $V$ correspond to positions and edges $E$ exist between each pair of ranks that shares at least one query-document pair, is connected~\cite[Theorem 1]{Chen2024Identifiability}. Similarly, for additive two-tower models, if a graph of positions is connected, all rank-dependent offsets $\Delta_k$ must be equal to a global offset $\Delta$\footnote{\citet{Chen2024Identifiability} consider multiplicative parameter shifts under the Examination Hypothesis~\cite{Richardson2007PBM,Craswell2008PBM}, but we can directly apply their graph reasoning to additive shifts.}. To readers familiar with literature on position bias estimation, it might be useful to think of different randomization schemes as different ways to connect a graph of positions~\cite{Radlinski2006FairPairs,Craswell2008PBM,Joachims2017IPW}. Note that when the graph contains disconnected components, each component can have its own arbitrary parameter shift, independent of other components, which fundamentally prevents the recovery of a single unique set of model parameters.

Thus, random swaps unify parameter shifts across positions to a single global parameter shift. This last remaining ambiguity is commonly resolved through normalizing parameters~\cite[Section 6.3]{Lewbel2019IdentificationZoo}. A standard normalization for additive parameter shifts are location normalizations, e.g., fixing the first bias parameter $\theta_1=0$ (used in this work) or enforcing bias parameters to be centered around zero: $\sum_{k = 1}^{K} \theta_{k} = 0$. By normalizing bias parameters, we can finally recover a unique set of both bias and relevance parameters. Thus, \textit{additive two-tower models are identifiable up to an additive constant when observing query-document pairs across positions, such that all positions form a connected graph.} 

\vspace*{-1mm}
\subsection{Identification through overlapping features}
\label{sec:Identification-feature-overlap}

In most applications of two-tower models, we do not assign independent model parameters to each query-document pair. Instead, we predict relevance from a shared feature representation. In this setting, every query-document pair is represented by a feature vector $x_{q,d} \in \reals^m$ and the relevance tower $r(\cdot)$ consists, for example, of a linear model or a neural network:
\begin{equation}
    P(C = 1 \mid q, d, k) = \sigma(\theta_{k} + r(x_{q,d})).
\end{equation}
Note that when discussing neural networks in this work, we ask if the output logits are identified (nonparametric identification \cite[Section 6]{Lewbel2019IdentificationZoo}) not the network's parameters. Our main result builds on \cite[Theorem 1]{Chen2024Identifiability} and shows that identifiability is possible without swaps as long as query-document features overlap across ranks:

\begin{theorem}[Identifiability through feature overlap]
    \label{theorem:feature-overlap}
    Let $G = (V, E)$ be an undirected graph where vertices $V$ correspond to positions and an edge $(k, k') \in E$ exists if the feature support\footnote{Colloquially, the set of feature vectors that can appear at rank $k$.} between positions $k$ and $k'$ overlap:
    \begin{equation}
        \label{eq:identifiability-features}
           \begin{split}
            \supp{P(x \mid k)} &\cap \supp{P(x \mid k')} \neq \emptyset.
           \end{split}
    \end{equation}
    If $G$ is connected and the relevance tower $r(\cdot)$ is continuous, then the additive two-tower model is approximately identifiable up to an additive constant.
    \end{theorem}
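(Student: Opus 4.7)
The plan is to adapt the argument from Section~\ref{sec:identifiability-randomization} by replacing ``the same query-document pair observed at two positions'' with ``feature vectors in the overlapping feature support of two positions,'' using continuity of the relevance tower to bridge the gap. First I would redo the shift-invariance step in the feature-based setting: if two parameter sets $(\theta, r)$ and $(\theta', r')$ induce identical click probabilities, then injectivity of $\sigma$ forces $\theta_k + r(x) = \theta_k' + r'(x)$ wherever $x$ can appear at rank $k$. Rearranging yields
\[
r(x) - r'(x) \;=\; \theta_k' - \theta_k \;=\; \Delta_k \qquad \text{for all } x \in \supp{P(x \mid k)}.
\]
This is the feature-level analogue of Eq.~\ref{eq:deltas-2}: the left-hand side depends only on $x$, the right-hand side only on $k$.

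Next I would argue that every edge of $G$ forces two rank offsets to coincide. If $(k, k') \in E$, pick $x^{*} \in \supp{P(x \mid k)} \cap \supp{P(x \mid k')}$. By the definition of support, there exist sequences $x_n \to x^{*}$ with $x_n \in \supp{P(x \mid k)}$ and $x_n' \to x^{*}$ with $x_n' \in \supp{P(x \mid k')}$, along which $r(x_n) - r'(x_n) = \Delta_k$ and $r(x_n') - r'(x_n') = \Delta_{k'}$. Continuity of $r$ and $r'$ then delivers
\[
\Delta_k \;=\; \lim_{n\to\infty}\bigl(r(x_n) - r'(x_n)\bigr) \;=\; r(x^{*}) - r'(x^{*}) \;=\; \lim_{n\to\infty}\bigl(r(x_n') - r'(x_n')\bigr) \;=\; \Delta_{k'}.
\]
Iterating this equality along edges of the connected graph $G$ collapses all rank-dependent offsets $\Delta_k$ to a single global constant $\Delta$, exactly as in the graph-connectedness argument of \citet{Chen2024Identifiability}.

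Finally, I would close with the same location normalization used in Section~\ref{sec:identifiability-randomization} (e.g., fixing $\theta_1 = 0$, or centering the $\theta_k$'s), which pins $\Delta = 0$ and yields identification of both $\theta$ and the function values $r(x)$ up to an additive constant.

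The main obstacle I anticipate is handling the support-overlap condition cleanly, and this is likely what the ``approximately'' qualifier in the statement is pointing at. If $\supp{P(x \mid k)}$ is taken in the usual topological sense as the closure of the mass-carrying set, the point $x^{*}$ witnessing overlap need not itself be observed at either rank; the equality $\Delta_k = \Delta_{k'}$ is obtained only through continuity-based limits rather than from coincident observations, as in the exact-swap case. I would therefore be explicit that the result is a \emph{population}-level identification statement under a continuous $r$: in the infinite-data limit with overlapping supports, the towers are pinned down up to a global additive constant, but finite-sample recovery at a specific $x^{*}$ still relies on the regularity supplied by continuity rather than on exact replication of features across positions.
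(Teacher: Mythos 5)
Your proposal follows the same overall route as the paper's proof---shift invariance at the feature level, edge-wise equality of the rank offsets $\Delta_k$, collapse to a global offset via graph connectivity, then location normalization---but the key step differs in a way worth noting. The paper does \emph{not} pick a single point $x^{*}$ in the intersection of the supports; it instead takes $x_1 \sim P(x \mid k)$ and $x_2 \sim P(x \mid k')$ with $x_1 \approx x_2$, assumes $r$ and $r'$ share a Lipschitz constant $L$, and bounds $|\Delta_k - \Delta_{k'}| \leq 2L\,\|x_1 - x_2\|_2$, concluding only that the offsets converge to a common value as the overlap increases---hence the ``approximately identifiable'' in the statement. Your version, which extends the identity $r(x) - r'(x) = \Delta_k$ to the closed support by continuity and then evaluates both sides at a common point $x^{*}$, yields \emph{exact} equality $\Delta_k = \Delta_{k'}$ under the literal hypothesis $\supp{P(x \mid k)} \cap \supp{P(x \mid k')} \neq \emptyset$, and needs only continuity rather than a shared Lipschitz constant. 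So your argument is both cleaner and strictly stronger for the theorem as stated; what the paper's quantitative bound buys in exchange is a graceful degradation story for the practically relevant case where supports are merely close rather than truly intersecting, which is the regime the authors care about (feature overlap shrinking with dimensionality, finite data). Your closing remark correctly diagnoses that the residual ``approximately'' is about this near-overlap/finite-sample regime rather than about any gap in the population-level argument.
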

    
    \begin{proof}
    Consider the alternative parameterization $\theta'_k$ and $r'(\cdot)$ that yield identical click probabilities to our original parameters $\theta_k$ and $r(\cdot)$. Recalling Eq.~\ref{eq:deltas-2}, for these parameterizations to produce the same click probabilities, the following must hold:
    \begin{equation}\label{eq:feature_deltas}
        r(x_{q,d}) - r'(x_{q,d}) = \theta'_{k} - \theta_{k} = \Delta_k,
    \end{equation}
    where $\Delta_k$ is a rank-dependent offset. For any two positions $k$ and $k'$ that share overlapping feature support, there exist query-document pairs with feature vectors $x_1 \sim P(x|k)$ and $x_2 \sim P(x|k')$ such that $x_1 \approx x_2$. By the continuity of $r(\cdot)$ and $r'(\cdot)$, we have:
    \begin{equation}
        r(x_1) - r'(x_1) = \Delta_k \quad \text{and} \quad r(x_2) - r'(x_2) = \Delta_{k'}.
    \end{equation}
    Assuming that $r(\cdot)$ and $r'(\cdot)$ share a Lipschitz constant $L$, we can bound the difference in parameter offsets between positions as:
    \begin{align}
    |\Delta_k - \Delta_{k'}| \leq 2L \cdot \|x_1 - x_2\|_2.
    \end{align} 
    This bound shows that when feature vectors are similar between positions, their parameter offsets must also be similar. As the feature overlap increases (i.e., as $\|x_1 - x_2\|_2 \to 0$), the difference in offsets approaches zero. Given that the graph $G$ is connected, there exists a path between any two positions $k$ and $k'$. Along this path, the parameter offsets between adjacent positions are approximately equal due to the continuity constraint. By transitivity across the connected graph, all position-dependent offsets $\Delta_k$ must converge to a single global offset $\Delta$ up to an approximation error that diminishes as feature overlap increases. After normalization (e.g., setting $\theta_1 = 0$), the model parameters are uniquely identified.
\end{proof}

\noindent%
We showed that when query-document features are similar between positions, their parameter offsets must also be similar. Therefore, to identify a model from features, we need feature spaces that overlap between positions. The overlap condition ensures that we can find similar query-document features across ranks and the continuity condition constrains parameter offsets across positions to enable approximate model identification. Thus, \emph{when document swaps across ranks are unavailable, a continuous relevance model combined with overlap in the feature distributions across positions can (approximately) identify two-tower models up to an additive constant}.

In practice, many two-tower models use bias features beyond position, such as device platform, content type, or the display height on screen~\cite{Zou2022Baidu,Chen2023-TENCENT-ULTR-1,Yan2022TwoTowers}. Such information might be captured in a feature vector $z_{q,d} \in \reals^n$ serving as input to the bias tower $b(\cdot)$:
\begin{equation}
    P(C = 1 \mid q,d) = \sigma(b(z_{q,d}) + r(x_{q,d})).
\end{equation}
Our identifiability Theorem~\ref{theorem:feature-overlap} directly extends to this model setup, but with increased combinatorial complexity. The model remains identifiable only when sufficient overlap exists in document distributions across the entire space of bias feature combinations. We direct the interested reader on this topic to Appendix~\ref{appendix:beyond-position}.

\vspace*{-1mm}
\subsection{Practical pitfalls of overlapping features}

The previous section makes two key assumptions for identifiability that can be difficult in practice: overlapping document features and a continuous relevance model. First, feature overlap decreases with increasing dimensionality~\cite{DAmour2021Overlap}. That is, as we add more query-document features, documents become less likely to be sufficiently close in feature space. Therefore, we should aim to use fewer query-document features, use dimensionality reduction methods, or introduce document swaps to guarantee overlapping support.

Secondly, the continuity assumption requires that small differences in feature space do not cause large, discontinuous jumps in relevance predictions. While neural networks are continuous, deep networks can produce large jumps in relevance prediction even for minor feature differences. Thus, when randomized data is unavailable, it is advisable to use shallow neural networks, regularization, or making parametric assumptions when possible to limit the expressiveness of the relevance model. Note that this advice may conflict with our later discussion of model misspecification in Section~\ref{sec:logging_policy}.

\vspace*{-1mm}
\section{RQ2: Influence of the Logging Policy}

We have seen that identifying additive two-tower models requires overlap in distributions across ranks, either through explicit randomization or shared query-document features that allow the disentangling of bias and relevance. Therefore, it is important first to highlight that logging policies govern the query-document distributions across ranks and, thus, have a major impact on identifiability. E.g., a single, deterministic logging policy will never collect a dataset that is suitable to identify a two-tower model with separate relevance parameters per query-document pair, as this model requires explicit document swaps (Section~\ref{sec:identifiability-randomization}). Even when we generalize across query-document features, we require overlap in the feature distributions across ranks, which are fundamentally governed by the logging policy. Thus, \emph{logging policies may collect data that is insufficient for identifying two-tower models.}

\vspace*{-1mm}
\subsection{Influence beyond identifiability?}
The connection between a logging policy and identifiability seems clear: we must collect overlapping feature or document distributions across ranks. Our analysis so far, and prior work on identifiability~\cite{Chen2024Identifiability}, suggest that ensuring that a logging policy collects sufficient document swaps across all bias dimensions guarantees that two-tower models can recover model parameters given enough data. Recent work \cite{Zhang2023Disentangling,Luo2024UnbiasedPropensity} and our experiment in Fig.~\ref{fig:example}, however, found gradual degradation in ranking performance when training two-tower models on logging policies with gradually increasing ranking performance, hinting at an influence of the logging policy beyond a rather binary identifiability condition. Next, we investigate which role logging policies might play when training two-tower models.

\vspace*{-1mm}
\subsection{Logging policy impact on model estimation}
\label{sec:logging_policy}

For simplicity, we examine the role of the logging policy for a two-tower model with separate relevance parameters for each query-document pair. Our findings below translate to the case in which we generalize over query-document features. To begin, we highlight the role of the logging policy by rewriting the expected negative log-likelihood from Eq.~\ref{eq:expected_log_likelihood}:
\begin{equation}
    \begin{split}
        &\mathcal{L}(\theta, \gamma)=
        - \sum_{q} P(q)\! \sum_{d, k} \pi(d, k \mid q) \bigl[ P(C\!=\!1 \mid q,d,k) \cdot {}
        \\[-4pt]
        &\mbox{}\hspace*{0.5cm} 
        \log \sigma(\theta_k + \gamma_{q,d})+ P(C\!=\!0|q,d,k)\log (1-\sigma(\theta_k + \gamma_{q,d})) \bigr].
    \end{split}
\end{equation}
We define a policy as the joint probability $\pi(d, k \mid q)$ of a document $d$ being shown at rank $k$ for query $q$. We can observe that the policy weights the contribution of each query-document pair to the loss. The analysis below separates the impact of logging policies on well-specified versus misspecified models.

\begin{lemma}[No policy impact on well-specified models]
\label{lemma:well-specified}
If a two-tower model is well-specified (i.e., can perfectly model the true click probabilities) and identifiable, then the logging policy has no effect on the estimated model parameters.
\end{lemma}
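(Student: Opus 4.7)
My plan is to show that the expected negative log-likelihood decomposes into a nonnegatively weighted sum of pointwise cross-entropy terms, each of which is minimized independently of its weight at the true click probability. The logging policy $\pi$ then only determines the weights, not the location of the optimum, so the estimated parameters do not depend on $\pi$.

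First I would rewrite the expected loss from the displayed equation above the lemma as
$\mathcal{L}(\theta,\gamma) = \sum_q P(q) \sum_{d,k} \pi(d,k\mid q)\, H\bigl(P(C\mid q,d,k),\, \sigma(\theta_k + \gamma_{q,d})\bigr),$
where $H(\cdot,\cdot)$ denotes the binary cross-entropy between the true Bernoulli with success probability $P(C=1\mid q,d,k)$ and the model's predicted Bernoulli. The crucial structural observation is that $\pi$ appears only as a nonnegative scalar multiplier in front of each term and does not couple the parameters of different triples inside any single cross-entropy.

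Next I would invoke Gibbs' inequality pointwise: for every fixed $(q,d,k)$, the map $\eta \mapsto H\bigl(P(C\mid q,d,k), \sigma(\eta)\bigr)$ attains its unique minimum exactly at $\eta = \sigma^{-1}(P(C=1\mid q,d,k))$, and this minimum location does not depend on the weight multiplying the term. By the well-specification hypothesis there exist true parameters $(\theta^*,\gamma^*)$ with $\theta^*_k + \gamma^*_{q,d} = \sigma^{-1}(P(C=1\mid q,d,k))$ for every $(q,d,k)$ in the support, so $(\theta^*,\gamma^*)$ simultaneously minimizes every term of the sum. Any nonnegatively weighted combination of terms is therefore also minimized at $(\theta^*,\gamma^*)$, regardless of the particular $\pi$ chosen. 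Finally, the identifiability hypothesis, together with normalization (e.g.\ $\theta_1=0$) as discussed in Section~\ref{sec:identifiability}, pins the minimizer down uniquely, so the estimated parameters coincide with $(\theta^*,\gamma^*)$ and are independent of $\pi$.

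The main subtlety I expect to flag is the treatment of triples with $\pi(d,k\mid q)=0$: such triples contribute zero to the loss and thus, taken in isolation, leave the corresponding parameters formally unconstrained. This is precisely the scenario that the identifiability assumption excludes --- identifiability from data generated under $\pi$ requires the support of $\pi$ to be rich enough (via swaps or feature overlap, as developed in Section~\ref{sec:identifiability}) that the parameters at these triples are already determined by the other terms via the additive structure. Hence invoking identifiability disposes of this otherwise delicate case without any additional computation.
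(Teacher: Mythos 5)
Your proof is correct, and it reaches the same crux as the paper --- that under well-specification the pointwise optimum $\sigma(\theta_k + \gamma_{q,d}) = P(C=1 \mid q,d,k)$ is attainable simultaneously for every triple with $\pi(d,k\mid q) > 0$, so the policy weights cannot move the optimizer --- but it gets there by a genuinely different route. The paper differentiates the policy-weighted loss with respect to $\gamma_{q,d}$ and $\theta_k$ and observes that the first-order conditions are satisfied whenever the model matches the true click probabilities on the support of $\pi$; this only verifies that the true parameters are a stationary point, and it has the virtue of setting up the residual notation $\epsilon(q,d,k)$ that carries directly into the companion lemma on misspecified models, where the gradient conditions become policy-weighted averages of residuals. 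Your argument instead decomposes the loss into a nonnegatively weighted sum of per-triple cross-entropies and invokes Gibbs' inequality to conclude that the true parameters are a \emph{global} minimizer of every term, hence of any nonnegative reweighting; this is slightly stronger than the paper's stationarity check and sidesteps any worry about non-convexity of the joint objective. Your explicit handling of triples with $\pi(d,k\mid q) = 0$ --- noting that these are exactly the cases the identifiability hypothesis must cover via swaps or feature overlap --- is a subtlety the paper's proof leaves implicit in the phrase ``as long as identifiability is guaranteed,'' and is worth keeping.
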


\begin{proof}
The partial derivatives of the loss function with respect to the model parameters are:
\begin{equation}
\begin{split}
    &\mbox{}\hspace*{-3mm}\frac{\partial \mathcal{L}}{\partial \gamma_{q,d}} \!=\!{}\\[-3pt]
    &\mbox{}\hspace*{5mm}P(q) \! \sum_{k} \pi(d, k \mid q) \bigl[ P(C\!=\!1 \mid q,d,k) - \sigma(\theta_k + \gamma_{q,d}) \bigr]\!=\!0,
    \hspace*{-3mm}\mbox{}
\end{split}
\label{eq:relevance-derivative}
\end{equation}
\vspace*{-3mm}
\begin{equation}
\begin{split}
    & \frac{\partial \mathcal{L}}{\partial \theta_k} \!=\! {}\\
    & \sum_{q} P(q) \! \sum_{d} \pi(d, k \mid q) \bigl[ P(C\!=\!1 \mid q,d,k) - \sigma(\theta_k + \gamma_{q,d}) \bigr]\!=\!0.
\end{split}    
\label{eq:bias-derivatie}
\end{equation}
%
%
For these gradients to vanish at the optimal parameters, the following condition must hold for every query-document pair with non-zero display probability, i.e., when $\pi(d, k \mid q) > 0$:
\begin{equation}
    P(C\!=\!1 \mid q,d,k) = \sigma(\theta_k + \gamma_{q,d}).
\end{equation}
In a well-specified model, this condition can be satisfied for all query-document pairs where $\pi(d, k \mid q) > 0$. Thus, the influence of the logging policy on the estimated parameters vanishes as long as identifiability is guaranteed.
\end{proof}

\begin{lemma}[Logging policy impact on misspecified models]
\label{lemma:misspecified}
When a two-tower model is misspecified, systematic correlations between the model's residual errors and the logging policy can introduce bias in parameter estimation, even when identifiability is guaranteed.
\end{lemma}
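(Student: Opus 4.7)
The plan is to begin from the first-order optimality conditions derived in the proof of Lemma~\ref{lemma:well-specified} and identify exactly where that argument breaks down once the model is misspecified. I would define the per-observation residual $\epsilon(q,d,k) := P(C=1\mid q,d,k) - \sigma(\theta_k + \gamma_{q,d})$ and observe that misspecification means there is no parameter choice making $\epsilon$ vanish on the entire support of the logging policy $\pi$. Consequently, Equations~\ref{eq:relevance-derivative}--\ref{eq:bias-derivatie} no longer collapse to the pointwise condition $\epsilon(q,d,k)=0$ invoked in Lemma~\ref{lemma:well-specified}; instead they become $\pi$-weighted orthogonality constraints, requiring the residuals to be zero only on average under $\pi$.

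Next, I would contrast two distinct policies $\pi$ and $\pi'$, both of which induce a connected position graph and hence satisfy identifiability by Theorem~\ref{theorem:feature-overlap}. Under each policy, the optimum solves a different weighted orthogonality system, so the solutions must differ whenever the residuals are non-zero somewhere in the support. To expose the dependence on $\pi$ explicitly, I would rewrite the bias-parameter stationarity condition as a $\pi$-weighted sum over documents and reinterpret it as a covariance-type quantity between the policy mass $\pi(d,k\mid q)$ and the residual $\epsilon(q,d,k)$. Whenever this quantity is non-zero, moving from a uniform policy to one that systematically places certain documents at certain positions tilts the estimated $\theta_k$ away from the value it would take under uniform exposure.

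To make the bias-amplification mechanism concrete, I would construct a minimal worked example: fix true parameters, introduce a restricted relevance tower that cannot reproduce $\gamma^{*}_{q,d}$ exactly (e.g.\ a linear parameterisation when the truth is nonlinear, or a single relevance parameter tied across two documents with different true relevance), and solve the stationarity equations in closed form for (i) a uniform policy and (ii) a policy that correlates document placement with true relevance. The resulting bias in $\theta_k$ is proportional to the $\pi$-weighted residual moment identified in the previous paragraph, directly exhibiting how a policy correlated with relevance amplifies misspecification error into the bias estimates. This example supplies the non-vanishing witness needed to conclude the lemma.

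The hard part will be formalising "systematic correlation between residuals and the logging policy" and specifying what "bias in parameter estimation" is measured against. I would define the reference point as the parameter set recovered under a uniform, maximally randomising policy, and express the deviation under any alternative $\pi$ via a first-order expansion of the stationarity conditions around this reference. The delicate step is ensuring the expansion is valid at the misspecified optimum and that the resulting expression genuinely isolates a non-zero covariance between $\pi$ and $\epsilon$, rather than a quantity that could vanish for structural reasons unrelated to misspecification; the worked example then certifies that this covariance is non-zero in a canonical setting, which suffices to establish the claim.
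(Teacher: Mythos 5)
Your proposal follows the same core route as the paper: you define the identical residual $\epsilon(q,d,k) = P(C{=}1\mid q,d,k) - \sigma(\theta_k+\gamma_{q,d})$, derive the same $\pi$-weighted stationarity conditions from the gradients in Lemma~\ref{lemma:well-specified}, and locate the bias in the covariance between the policy mass $\pi(d,k\mid q)$ and the residuals. Where you diverge is in how much you propose to prove. The paper's argument stops at the qualitative observation that the optimizer must shift parameters whenever residuals correlate with the policy, and explicitly notes (as you also do) that misspecification alone biases parameters even under a uniform policy; it does not construct a witness, define a reference point for ``bias,'' or carry out any expansion. You propose all three: a closed-form worked example with a restricted relevance tower under two contrasting policies, a definition of bias relative to the uniform-policy optimum, and a first-order expansion of the stationarity conditions around that reference. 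Since the lemma is an existence claim (``\emph{can} introduce bias''), your explicit non-vanishing witness would make the proof strictly more complete than the paper's, at the cost of the extra work you correctly flag as the delicate part (validity of the expansion at a misspecified optimum, and ruling out structural cancellation of the covariance). The paper instead defers that burden to its simulation experiments in Fig.~\ref{fig:model-misfit}, which play the role of your worked example empirically.
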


\begin{proof}
When the model is unable to match the true click probabilities, we have a non-zero residual click prediction error:
\begin{equation}
    \epsilon(q,d,k) \equiv P(C\!=\!1 \mid q,d,k) - \sigma(\theta_k+\gamma_{q,d}) \neq 0,
\end{equation}
with the gradient conditions becoming:
\begin{equation}\label{eq:gradient-residual}
    \sum_k\pi(d,k\mid q)\,\epsilon(q,d,k)=0,
     \ \
    \sum_q\sum_d\pi(d,k\mid q)\,\epsilon(q,d,k)=0.
\end{equation}
These conditions state that the policy-weighted averages of the residuals must vanish across positions and across items. However, when our model's click prediction errors $\epsilon(q,d,k)$ are systematically correlated with the policy's display probability $\pi(d, k \mid q)$, the optimizer must shift model parameters from their true values to satisfy these conditions. Note that model misspecification alone can bias model parameters as the average residual error might be non-zero even under a uniform random logging policy. But the logging policy can introduce additional bias in parameter estimation beyond what would be caused by model misspecification.
\end{proof}

\noindent%
Our analysis reveals a nuanced relationship between logging policies and additive two-tower models. While perfectly specified models remain unaffected by logging policies (beyond identifiability concerns), real-world models might inevitably contain some degree of misspecification. This creates a vulnerability where systematic correlations between model errors and logging policy behaviors can significantly distort parameter estimation. \emph{The more a logging policy systematically favors certain documents for certain positions, the more these correlations can amplify estimation bias.}

\vspace*{-1mm}
\subsection{Common sources of model misspecification}
In the following, we highlight three scenarios where model misspecification creates errors systematically correlated with logging policies, leading to biased parameter estimation. We will omit a detailed mathematical analysis of each use case for brevity.

\paragraph{Functional form mismatch} Model misspecification arises when our model does not match the true data-generating process. For example, we fit a linear relevance model to data following non-linear patterns. If this mismatch leads to incorrect click predictions for query-document pairs with certain features and our logging policy associates these features with specific positions, we have a problematic correlation between our residuals and our logging policy, leading to biased model parameters.

\paragraph{Omitted variable bias} A second common problem is a logging policy having access to more information than our current model. If our logging policy used features that are predictive of user preference to place certain query-document pairs in certain positions (e.g., through business rules), but our current model does not have access to these features, our residual prediction errors will be systematically correlated with position due to omitted variable bias~\cite{Wilms2021OmittedVariableBias}.

\paragraph{Expert policy in simulation}\label{sec:expert_policy} The third case is a subtle version of omitted variable bias, which is relevant in simulation experiments. In fact, it is the problem in our motivating example in Fig.~\ref{fig:example}.\footnote{We include results of our motivating example without an expert policy in Appendix~\ref{appendix:motivating-example}.}
A simple method to simulate a strong logging policy used by related work is sorting by the expert annotated labels provided in classic LTR datasets, i.e., sorting documents from most to least relevant~\cite{Zhang2023Disentangling,Deffayet2023CMIP}. However, note that the features provided in real-world LTR datasets~\cite{Qin2013MSLR,Chapelle2011Yahoo,Dato2022Istella22} cannot perfectly predict the expert relevance labels. Thus, sorting directly by these labels instead of sorting by a model trained on these datasets is essentially like a logging policy having access to a very predictive relevance feature that is being withheld from our current model, leading to omitted variable bias. In the following, we propose a simple method to reduce the effect of logging policies on misspecified two-tower models.

\section{RQ3: Sample Weights for Misspecified Models}
\label{sec:IPS}

While the focus of this work is analytical, our findings on logging policy effects suggest a natural fix for training two-tower models. The priority should be to fix model misspecification, as this lowers the impact of unequal exposure through the logging policy. However, should that not be possible (e.g., because business rules that impacted our past rankings were not documented), we can reduce the impact of uneven document distributions by weighting each query-document pair inversely to its propensity of being displayed:
\begin{equation}
\mbox{}\hspace*{-2mm}
    \begin{split}
        \mathcal{\hat{L}_{\text{IPS}}}(\theta, \gamma: D) = - \frac{1}{N} \sum_{(q,d,k,c) \in \mathcal{D}} \frac{1}{\pi(d,k\mid q)} \bigl[c\log \sigma(\theta_k + \gamma_{q,d}) \\
        + (1-c)\log (1-\sigma(\theta_k + \gamma_{q,d}))\bigr].
    \end{split}
\hspace*{-2mm}\mbox{}    
\end{equation}
Note how this approach is different from traditional inverse propensity scoring (IPS) methods used in unbiased learning-to-rank~\cite{Joachims2017IPW,Oosterhuis2020PolicyAware} as we target different biases: existing pointwise IPS methods primarily correct for position bias~\cite{Bekker2019PointwiseIPS,Saito2020PointwiseIPS,Hager2023ClickModelIPS}. In contrast, this sample weight corrects for the uneven document distribution across ranks. The loss is essentially fitting a model against a version of the dataset in which all documents appeared equally in all positions, which is related to corrections used in the position bias estimation literature~\cite{Agarwal2019AllPairs,Fang2019InterventionHarvesting,Benedetto2023ContextualBias}, the policy-aware IPS estimator for selection bias~\cite{Oosterhuis2020PolicyAware,Li2018OffPolicyClickModels}, or the recent work by \citet{Luo2024UnbiasedPropensity}. Note that estimating this propensity is a challenge on its own. In this paper, we calculate the propensities by simply counting how many times a query-document pair was displayed in a given position.

\begin{figure*}[h!]
    \includegraphics[width=1\textwidth]{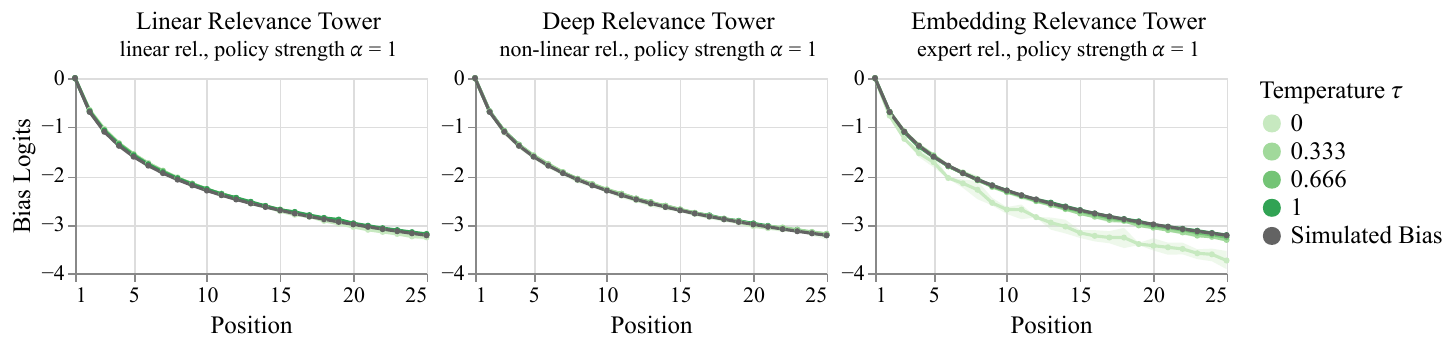}
    \caption{Evaluating position bias for three well-specified models matching the simulated user behavior. All models converge to ground-truth bias parameters while being trained on the strongest logging policy $(\alpha = 1)$. Feature-based models recover position bias without any document swaps ($\tau = 0$), while the embedding-based model requires swaps for identification ($\tau > 0$).}
    \label{fig:model-fit}
\end{figure*}

\vspace*{-1mm}
\section{Experimental Setup}
To evaluate our theoretical claims about model identifiability and logging policy effects, we implemented a comprehensive simulation framework. This controlled setting allows us to measure how well two-tower models can recover their underlying parameters under various conditions. Rather than solely relying on ranking metrics (such as nDCG), which can obscure underlying parameter estimation issues~\cite{Deffayet2023Robustness,Deffayet2023CMIP}, we directly compare inferred bias parameters against their ground-truth values to assess model identifiability. This provides clearer insights, as correctly estimated position bias parameters indicate properly identified relevance parameters due to their complementary relationship. By controlling logging policy strength, position randomization levels, and model specification, we will test each component of our theoretical analysis in isolation.

\paragraph{Datasets} The basis for our simulations are traditional learning-to-rank datasets with query-document features with an expert-annotated relevance label between 0-4. While we considered multiple datasets~\cite{Chapelle2011Yahoo,Dato2016Istella,Dato2022Istella22}, we will focus the discussion in the rest of the paper solely on MSLR30K~\cite{Qin2013MSLR} from the Bing search engine (31,531 queries and 136 dimensional feature vectors). We make this choice as our experiments consider very idealized simulations in which the choice of base dataset barely matters. We use the official training, validation, and testing splits. To increase computational efficiency, we truncate the number of documents per query to the 25 most relevant documents and drop all queries without a single relevant document ($\approx 3.1\%$ of queries). These steps purely aid the scalability of our simulation and all of our findings hold up on the full dataset. Next, we normalize each query-document feature provided using $\text{log1p}(x)=\ln(1 + |x|) \odot \text{sign}(x)$ following \citet{Qin2021AreNeuralRankers}.

\paragraph{Synthetic relevance} \label{sec:synthetic-relevance} LTR datasets come with relevance labels obtained by human experts~\cite{Qin2013MSLR,Chapelle2011Yahoo}. To isolate the effect of model mismatch, we additionally generate synthetic relevance labels that follow a known model class based on the provided query-document features. We generate a linear relevance label with Gaussian noise:
\begin{equation}\label{eq:linear-relevance}
    \hat{\gamma}_{q,d} = w^T x_{q,d} + \xi_{q,d}, \quad \xi_{q,d} \sim \mathcal{N}(0, \sigma^2),
\end{equation}
and a non-linear relevance label using a two-layer neural network with 16 neurons and tanh activations:
\begin{equation}\label{eq:non-linear-relevance}
    \hat{\gamma}_{q,d} = W_2^T \text{tanh}(W_1^T x_{q,d} + b_1) + b_2 + \xi_{q,d}, \quad \xi_{q,d} \sim \mathcal{N}(0, \sigma^2).
\end{equation}
In all our experiments we initialize the model weights randomly and add Gaussian noise with standard deviation $\sigma = 0.2$ to each label. Finally, we apply min-max scaling to ensure the resulting labels are in a comparable range to expert annotations, mapping the 5th to 95th percentiles of the synthetic labels to the range [0,4].

\paragraph{Logging policy} We isolate ranking performance and positional variability in our logging policy. To create a strong pointwise ranker, we train a relevance tower on the complete ground-truth relevance annotations of the training set with a mean-squared error loss. Note that we intentionally train our logging policy on the full training dataset and do not adhere to the common ULTR practice of training a weak ranker on 1\% of the dataset~\cite{Joachims2017IPW,Ai2018DLA,Jagerman2019ModelIntervene,Oosterhuis2020PolicyAware} as we need the strongest policy that our pointwise two-tower models could still capture. Our final logging policy score $s_{q,d}$ interpolates between our logging policy predictions and random noise:
\begin{equation}
\mbox{}\hspace*{-2mm}
    s_{q,d} = \operatorname{sign}(\alpha) \left(|\alpha| \cdot \hat{\gamma}_{q,d} + (1 -  |\alpha|) \cdot u_{q,d} \right), \ u_{q,d} \sim U(0, 4),
\hspace*{-2mm}\mbox{}    
\end{equation}
where $\alpha \in [-1, 1]$ is a hyperparameter to set the logging policy \emph{strength}. We simulate three levels of policy strength, using $\alpha = 1$ for the best possible ranking, $\alpha = 0$ for a random ranking, and $\alpha = -1$ for an inverse ranking that is worse than random by ranking documents from least to most relevant. Note that we sample noise only once per query-document pair and otherwise keep the ranking fixed across all user sessions for the same query. The noise interpolation setup follows \citet{Zhang2023Disentangling} but does not introduce additional biases as we sort by a trained logging policy and not directly by ground-truth expert labels (see Section~\ref{sec:expert_policy}).

Next, we transform our deterministic logging policy into a stochastic policy to introduce varying levels of positional variability into our simulation process, a crucial component to create overlap for our identifiability experiments. While various approaches exist to create stochastic policies, including the Gumbel-Max trick~\cite{Maddison2017Concrete,Bruch2019Softmax} or randomization schemes from the position bias literature \cite{Joachims2017IPW,Radlinski2006FairPairs}, we opt for an epsilon greedy strategy for simplicity and interpretability~\cite{Watkins1989EGreedy}. With this approach, we show a uniform random ranking with probability $\tau$, and otherwise rank deterministically according to our logging policy scores $s_{q,d}$. We refer to this probability $\tau$ as the \emph{temperature} of our logging policy, with $\tau = 0$ indicating a deterministic policy (showing the same ranking across all sessions) and $\tau = 1$ indicating a uniform random ranking per session.

\paragraph{Click simulation} We simulate user clicks with position bias following the two-tower model and leave an investigation of user model mismatch to future work. We define position bias logits as: $\hat{\theta}_k = - \ln(k)$ where $k$ is the rank of the current query-document pair. The final click probability is: $P(C = 1 \mid q, d, k) = \sigma(\hat{\theta}_k + (\hat{\gamma}_{q,d} - 2))$, with our final relevance logits being a zero-centered version of our relevance labels obtained earlier. We center the scores from their original range between [0, 4] to [-2, 2] to avoid click logits that might overly saturate the sigmoid. For all experiments, we simulate 1M user sessions for training and 500K sessions for validation and testing respectively. We repeat all simulations over three random seeds and plot the 95\% confidence interval across all figures.

\paragraph{Model} Lastly, we describe our model architecture. Our bias tower uses a single parameter per rank as the position bias logit $\theta_k$. Our relevance tower is either a single embedding parameter for each query-document pair $\gamma_{q,d}$, a linear model, or a two-layer feed-forward neural network (2 layers, 32 neurons, ELU activations). All models were optimized using AdamW~\cite{Loshchilov2019AdamW} with a learning rate of 0.003 and a weight decay of 0.01 up to 50 epochs, stopping early after three epochs of no improvement of the validation loss. The entire setup was implemented in Jax~\cite{Bradbury2018Jax}, Flax (NNX)~\cite{Heek2020Flax}, and Rax~\cite{Jagerman2022Rax}. Our code, data, and results are available here: \url{https://github.com/philipphager/two-tower-confounding}.

\begin{figure*}[h!]
    \includegraphics[width=1\textwidth]{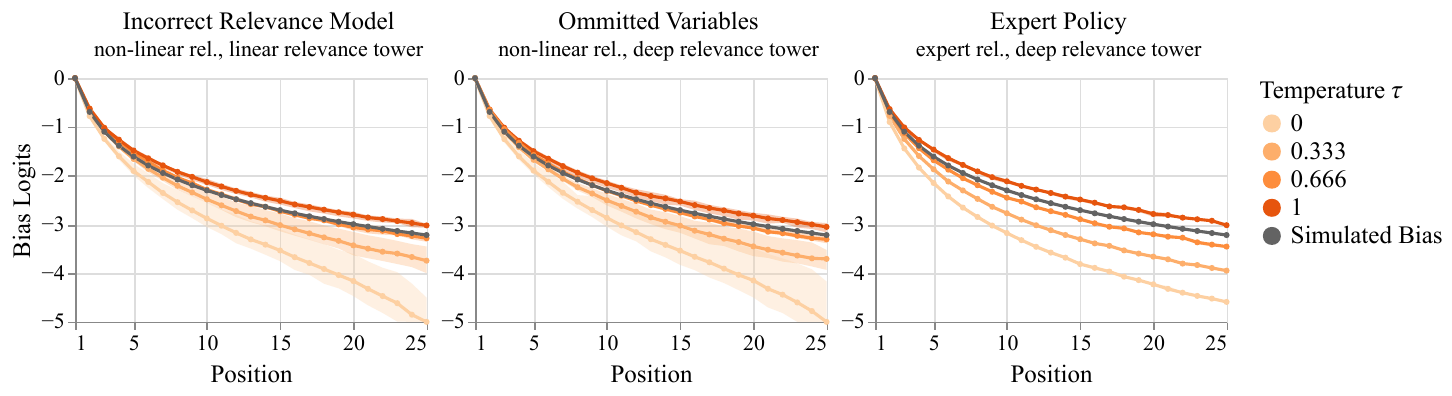}
    \caption{Evaluating position bias under model misspecification: (left) training a linear relevance model on non-linear user behavior, (middle) using fewer features when training two-tower models than are available to the logging policy, and (right) simulating an oracle logging policy by sorting by expert relevance labels from MSLR30K. The logging policy amplifies bias in all models. Model misspecification is visible as estimations do not match the simulated bias on fully randomized data ($\tau = 1$).}
    \label{fig:model-misfit}
\end{figure*}

\begin{figure*}[h!]
    \includegraphics[width=1\textwidth]{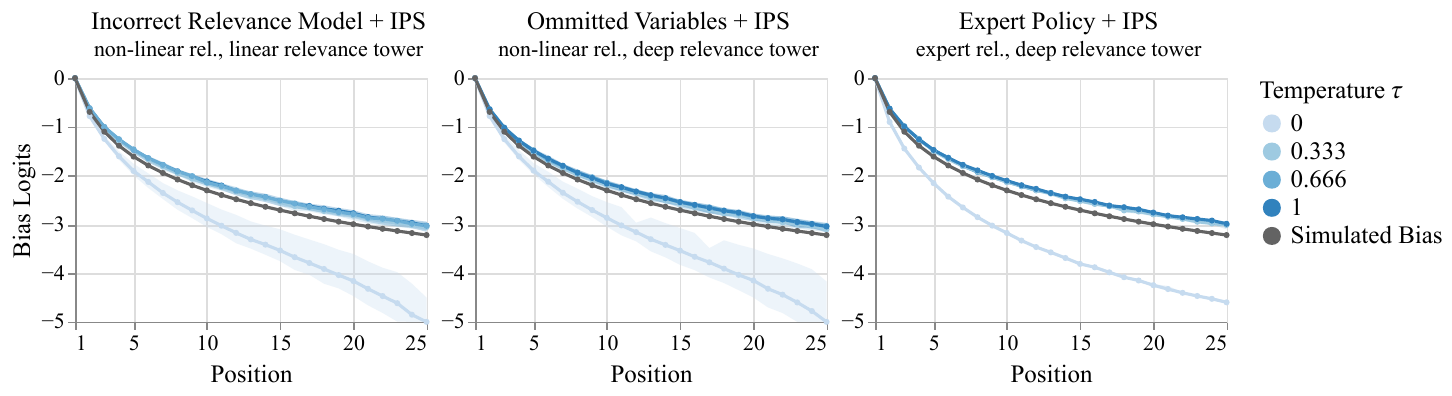}
    \caption{Reweighting samples during training inversely to the propensity of a policy placing an item in a given position. IPS requires that documents have a non-zero chance of occurring in other rank. Therefore, we see a helpful impact of IPS across various levels of randomization, but no improvement on a deterministic policy ($\tau = 0$). Note that while the logging policy impact is minimized, a bias remains in all three examples due to model misspecification.}
    \label{fig:model-misfit-ips}
\end{figure*}

\vspace*{-1mm}
\section{Results}
In the following, we empirically test the claims of our work. First, we will examine our theory of identification and the claim that logging policies do not impact well-specified two-tower models, meaning when our model assumptions match the actual user behavior. Second, we implement the three settings described in Section~\ref{sec:expert_policy} to investigate if logging policies amplify biases under model mismatch. Lastly, we evaluate whether the sample re-weighting scheme described in Section~\ref{sec:IPS} can alleviate the logging policy impact in each of the three model mismatch settings. Due to spacial constrains, we will only visualize models trained on the strongest logging policies (with varying temperature parameters), as this is the setting for which we expect the largest logging policy effects. Full evaluation results are available in our online repository.

\vspace*{-2mm}
\subsection{Evaluating well-specified models}
First, we evaluate our theory of identifiability stating that two-tower models with separate relevance parameters per query-document pair require exact document swaps. We display position bias estimated by a two-tower model trained with separate query-document parameters in the right panel of Fig.~\ref{fig:model-fit}. The model fails to infer the simulated bias under a deterministic policy ($\tau = 0$) when no document swaps are available. All other temperature settings introducing swaps across positions lead to an identified model.

The second part of our identifiability theory states that two-tower models can be identified from query-document features, as long as there are overlapping feature distributions between ranks. The left panel in Fig.~\ref{fig:model-fit} displays a linear relevance tower trained on clicks following a synthetic linear relevance model, and the middle panel a model using a neural relevance tower trained on users following a non-linear relevance distribution (both generated as described in Section~\ref{sec:synthetic-relevance}). The results confirm that under idealized conditions, model parameters can be perfectly recovered even on a fully deterministic policy, showcasing that features can fully identify two-tower models.

The last observation we can make in Fig.~\ref{fig:model-fit} is related to our claim that logging policies have no effect on well-specified two-tower models that capture the true data generating process. Once identifiability is guaranteed (through feature overlap or explicit document swaps), all three models are completely unaffected by different levels of simulated randomization that coincide with different logging policy performance. Our first experiment confirms our theory on identification as well as our claim that logging policies do not affect well-specified two-tower models once identifiability is guaranteed.

\vspace*{-2mm}
\subsection{Evaluating misspecified models}
Next, we investigate three scenarios of systematic model misspecification which, based on our theory in Section~\ref{sec:logging_policy}, should introduce a bias that should be amplified by our logging policy. Fig.~\ref{fig:model-misfit} displays the resulting position bias parameters. In the left panel, we simulate fitting an incorrectly specified relevance model by training a linear relevance tower on data following a non-linear relevance behavior. In the middle panel, we simulate omitted variable bias by training a logging policy on all 136 query-document features of MSLR30K but make only the first 100 features available to the downstream two-tower model. And the right panel creates model misfit by using the simulation setup of \citet{Deffayet2023CMIP} and \citet{Zhang2023Disentangling}, who rank query-document pairs based on expert-annotated relevance labels to create a strong logging policy. In all three cases, we first observe a notable bias in the estimated parameters that coincides with logging policy correlation (a higher temperature implies more randomness and less correlation). Secondly, we observe that a lower temperature (which creates a more unequal query-document distribution across ranks) amplifies bias. In other words, as rankings deviate more from uniformly random exposure, the magnitude of bias increases. Importantly, this bias manifests just as strongly when simulating an inverse logging policy that deliberately positions all relevant documents at the bottom (Appendix~\ref{appendix:extended-misspecification}).
This confirms that non-uniform query-document exposure across ranks can intensify biases in misspecified two-tower models. Lastly, we can also see that a uniform random ranking ($\tau = 1$) gets close but does not perfectly match the simulated bias, even under these highly idealized conditions. This confirms that model misfit on its own introduces bias, independent of logging policy effects.

\vspace*{-2mm}
\subsection{Addressing unequal exposure using IPS}
Our last experiment tests if the sample weighting scheme described in Section~\ref{sec:IPS} can help alleviate the effects of unequal document exposure on misspecified models. Fig.~\ref{fig:model-misfit-ips} displays the results of applying the weighting scheme to the same three simulation setups used in our previous model mismatch experiments. We can see that the weighting scheme does not work under a deterministic logging policy $\tau = 0$, as documents have a probability of one to be displayed in their initial rank and a propensity of zero to be displayed in any other rank. Propensity scoring requires positivity~\cite{DAmour2021Overlap}, meaning in our case that documents have a non-zero chance to be displayed at other ranks. We can see that in all other cases when we introduce increasing levels of variability into our policy, the models converge to identical position bias estimates, independent of the logging policy. Note that the estimated parameters are still slightly biased due to model misspecification, however, the bias amplification by the logging policy has been dampened. While this setting is highly idealized, it confirms that weighting query-document pairs inversely to their propensity of being displayed at a given rank, can help counterbalance unequal exposure effects in two-tower models.

\vspace*{-1mm}
\section{Discussion}
In the following, we discuss the practical implications of our findings and the limitations of our work.

\vspace*{-2mm}
\subsection{Practical implications}
\label{sec:practical_implications}

Our theory offers several important implications for those working with two-tower models. First, we find that models generalizing over query-document features require overlapping feature distributions across ranks to ensure identifiability. Second, while logging policies do not impact well-specified two-tower models, they can significantly amplify bias in models that systematically fail to capture the actual user behavior. Therefore, we advise determining the underlying user behavior model (e.g., using randomized experiments) and continuously monitoring click prediction residuals for correlations with the logging policy to detect emerging mismatches between model assumptions and reality. Accurately modeling users might create tensions: practitioners may be tempted to condition on more features to improve click predictions and minimize model misfit, but increasing feature dimensionality reduces overlap between ranks, potentially compromising model identifiability. Similarly, reducing the expressivity of model towers (e.g., using shallow networks and regularization) can improve identifiability but risks underfitting the data, leading to misspecification bias that logging policies may amplify. 

Collecting randomized data helps address these issues by creating overlap and guaranteeing identifiability while reducing unequal document exposure across ranks, thus lessening the impact of model misspecification. However, this must be balanced against a potential disruption of the user experience. For researchers working in simulation, we caution against creating logging policies by directly sorting documents according to expert annotations. This approach introduces a subtle form of omitted variable bias, simulating a logging policy with access to perfect relevance information that the two-tower model cannot capture. Instead, researchers should train a separate logging policy model from features. Finally, when some degree of model misspecification is unavoidable, consider propensity weighting to counteract potential bias amplification through the logging policy.

\vspace*{-2mm}
\subsection{Limitations}
\label{sec:limitations}

Our work has several limitations that motivate directions for future research. First, we focus on identifiability, a concept fundamentally evaluated in idealized conditions of a well-specified model and infinite data~\cite{Lewbel2019IdentificationZoo}. While we confirmed our theories empirically, our simulations represent highly idealized settings tied specifically to additive two-tower models. Future work might develop empirical tests for the identifiability of models on specific real-world datasets (practical identifiability~\cite{Raue2009ProfileLikelihood}). Note that even when data collection and model assumptions match in theory, empirical datasets might be too small (e.g., containing too few swaps across positions) to practically identify a two-tower model.
Second, we focused on a single prevalent user model. Future research could explore how various user models interact with logging policies or consider empirical tests for identifiability that allow plugging in other models with different assumptions.
Third, while we show that similar query-document features across ranks can identify two-tower models, future work might develop diagnostic metrics to quantify the overlap between high-dimensional query-document feature distributions on a given dataset. Promising directions to measure overlap include distance measures~\cite{Kolouri2019SlicedWasserstein,Gretton2012MMD}, hypothesis testing~\cite{LopezPaz2019ClassifierTwoSample,Gretton2012MMD}, or positivity measures from causal inference~\cite{DAmour2021Overlap}.
Lastly, our propensity scoring method mainly demonstrates that we identified the correct mechanism. To be practically useful, future research might consider better propensity estimation techniques and variance reduction methods.

\vspace*{-1mm}
\section{Conclusion}
We set out to explain why we observe degrading ranking performance in two-tower models on data collected by strong logging policies. To investigate this phenomenon, we first analyzed under which conditions we can identify the parameters of additive two-tower models from observational data and the impact of logging policies on parameter estimation. 

We first show that two-tower models can be identified from rankings without document swaps as long as feature distributions across ranks overlap. Secondly, we find that logging policies have no impact on well-specified two-tower models, meaning when our model assumptions match the actual user behavior.

However, misspecified two-tower models making systematic click prediction errors lead to biased parameter estimates. If these prediction errors also correlate with document placement across positions, we find that logging policies can amplify this problem and lead to stronger bias in model parameters, which we show in three exemplary use cases. 

Lastly, we show that weighting each query-document pair by their probability of being displayed to the user can help mitigate additional bias introduced by the logging policy. However, the method cannot fully resolve biases introduced by model misspecification.

\vspace*{-1mm}
\begin{acks}
We thank Philip Boeken and Shashank Gupta for their insightful discussions and valuable feedback. This research was supported by the Mercury Machine Learning Lab created by TU Delft, the University of Amsterdam, and Booking.com.
Maarten de Rijke was supported by the Dutch Research Council (NWO), project nrs 024.004.022, NWA.1389.20.\-183, and KICH3.LTP.20.006, and the European Union's Horizon Europe program under grant agreement No 101070212. All content represents the opinion of the authors, which their employers or sponsors do not necessarily endorse.
\end{acks}

\clearpage
\bibliographystyle{ACM-Reference-Format}
\balance
\bibliography{main}

\clearpage
\appendix

\section{Bias features beyond position}
\label{appendix:beyond-position}

Two-tower models owe their popularity in industry applications partially to their ability to use bias features beyond position that might impact an item's examination probability~\cite{Chen2023-TENCENT-ULTR-1,Yan2022TwoTowers}. These features might include, a.o., device platform (mobile vs. desktop), content type (text vs. video), or the display height of the document on screen~\cite{Zou2022Baidu}. Such information can be captured in a bias feature vector $z_{q,d} \in \reals^n$ serving as input to a dedicated bias tower $b(\cdot)$, commonly implemented a second neural network:
\begin{equation}
    P(C = 1 \mid q,d) = \sigma(b(z_{q,d}) + r(x_{q,d})).
\end{equation}
In this expanded model, all discussed identifiability principles from Section~\ref{sec:Identification-feature-overlap} still apply but with increased combinatorial complexity. The model remains identifiable only when sufficient overlap exists in document distributions across the entire space of bias features:
\begin{equation}
    \begin{split}
     \supp{P(x \mid z)} &\cap \supp{P(x \mid z')} \neq \emptyset.
    \end{split}
 \end{equation}
This requirement presents a combinatorial challenge, as the number of potential bias configurations grows exponentially with each additional bias dimension.

To illustrate, consider a setting with five positions, three content types, and two device types. This creates 30 distinct bias configurations that must form a connected graph through overlap in the query-document feature distribution. Without careful data collection, certain bias combinations may remain isolated, leading to unidentifiable parameters in those regions of the feature space. In practice, this may require explicit randomization across bias dimensions or the collection of vast datasets to capture enough natural variation across all bias configurations.

\section{Motivating example without expert policy}
\label{appendix:motivating-example}

We revisit our motivating example from the introduction and replace the expert policy (sorting by ground truth annotations) with a deep neural network trained on expert annotations. Removing this problematic source of model misspecification from our simulation greatly reduces the observed  logging policy impact.

\begin{figure}[H]
    \hspace*{-2em}
    \includegraphics[clip, trim=0mm -18mm 0mm 0mm, width=0.9\linewidth]{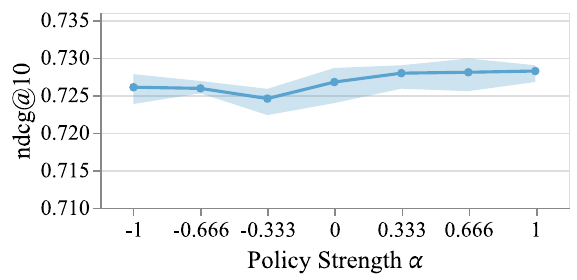}
    \vspace*{-1.4cm}
    \caption{Two-tower models trained on deterministic logging
    policies of varying strengths $\alpha$. The logging policy is a deep relevance tower trained on expert annotations instead of directly sorting by ground-truth relevance labels, as done in the introduction. This showcases the importance of model misspecification in ranking performance.}
    \label{fig:example-deep-policy}
\end{figure}

\section{Extended simulation results on model misspecification}
\label{appendix:extended-misspecification}

In the following, we provide more visualizations on our three cases of model misspecification. First, we display position bias logits estimated on the inverse policy in Fig.~\ref{fig:model-misfit-inverse-lp}, i.e., when placing the least relevant items on top which is worse than random ranking and creates correlation between document placement and the model's prediction errors. Second, we display the corresponding ranking performance in terms of nDCG@10, measured against the respective relevance label used in each experiment in Fig.~\ref{fig:model-misfit-ranking}. 

\begin{figure}[H]
    \includegraphics[width=0.9\columnwidth]{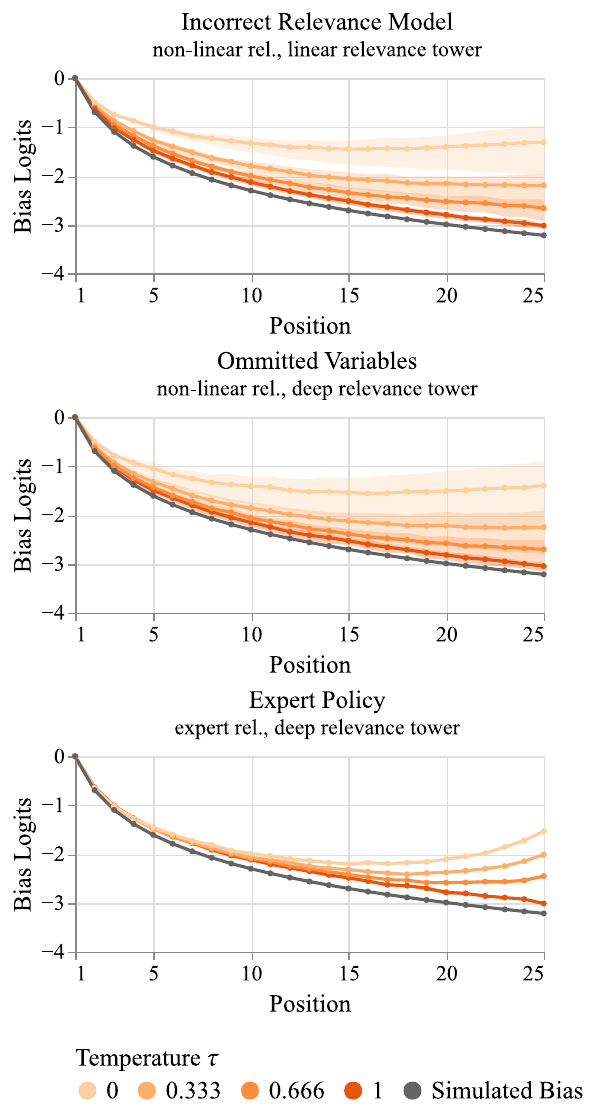}
    \caption{Position bias estimation under model misspecification on an inverse policy ($\alpha = -1$) placing the most relevant items at the bottom: (left) training a linear relevance model on non-linear user behavior, (middle) training the logging policy on more features than available to the two-tower model, and (right) sorting by expert relevance labels from MSLR30K as logging policy. The results confirm recent work on propensity over/underestimation by \citet{Luo2024UnbiasedPropensity}.}
    \label{fig:model-misfit-inverse-lp}
\end{figure}

\begin{figure}[H]
    \includegraphics[width=0.9\columnwidth]{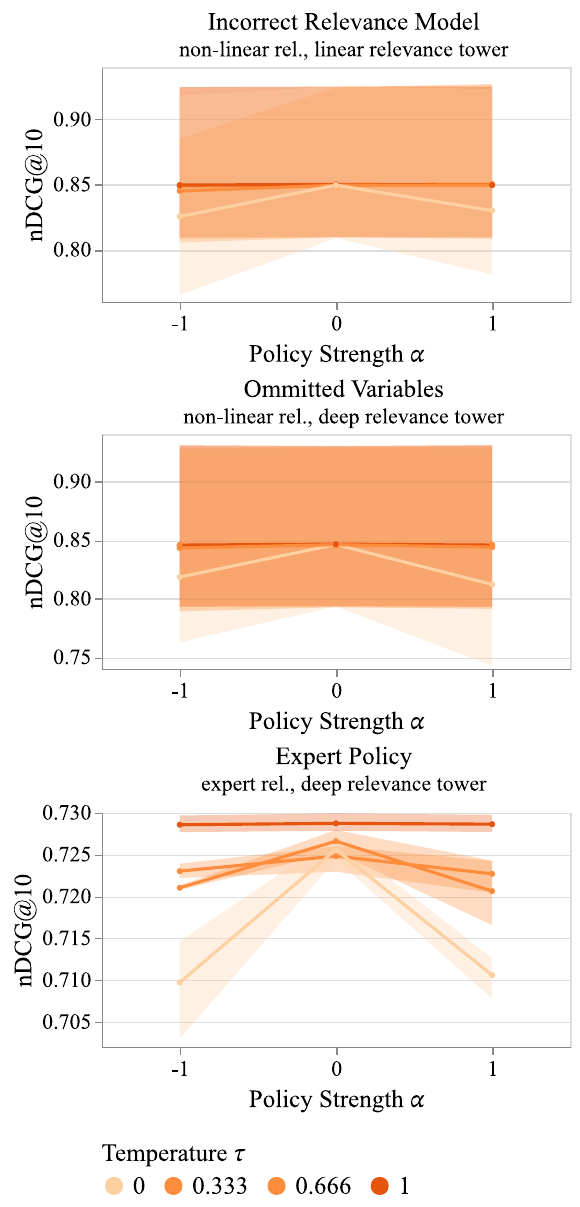}
    \caption{Ranking performance of the relevance tower under model mismatch across different logging policy strengths and temperatures. While the overall ranking performance is significantly lower than the ideal (an nDCG@10 of 1 is possible in the left and middle panel), ranking performance does not always correspond to the clear bias in position parameters highlighted in Fig.~\ref{fig:model-misfit} and Fig.~\ref{fig:model-misfit-inverse-lp}, at least not on our dataset size. This echos recent findings on the unreliability of evaluating click model parameters through nDCG~\cite{Deffayet2023Robustness,Deffayet2023CMIP}.}
    \label{fig:model-misfit-ranking}
\end{figure}

\end{document}